\documentclass[11pt,a4paper]{article}

\usepackage[a4paper, margin=1in]{geometry}
\usepackage[T1]{fontenc}
\usepackage{amsmath,amssymb,amsthm}
\usepackage{booktabs}
\usepackage{graphicx}
\usepackage{algorithm}
\usepackage{algpseudocode}
\usepackage{hyperref}
\usepackage{cleveref}
\usepackage[expansion=false]{microtype}
\usepackage{xcolor}
\usepackage{enumitem}
\usepackage{caption}
\usepackage{subcaption}
\usepackage{array}
\usepackage{tabularx}
\usepackage{multirow}
\usepackage{url}

\newtheorem{definition}{Definition}[section]
\newtheorem{theorem}{Theorem}[section]
\newtheorem{corollary}{Corollary}[theorem]
\newtheorem{claim}{Claim}[section]
\newtheorem{remark}{Remark}

\hypersetup{
  colorlinks=true,
  linkcolor=blue!60!black,
  citecolor=green!50!black,
  urlcolor=blue!70!black,
  pdftitle={The Bureaucracy of Speed},
  pdfauthor={Vlad Novytskyy},
  pdfsubject={Multi-Agent Authorization Revocation},
  pdfkeywords={cache coherence, multi-agent systems, authorization, revocation, MESI, release consistency}
}

\newcommand{\CCS}{\textsc{CCS}}
\newcommand{\MESI}{\textsc{MESI}}
\newcommand{\RCC}{\textsc{RCC}}

\newcommand{\dphi}{\varphi}
\newcommand{\E}{\mathcal{E}}
\newcommand{\B}{\mathcal{B}}
\newcommand{\M}{\mathcal{M}}

\begin{document}

\title{\textbf{The Bureaucracy of Speed: Structural Equivalence Between\\
Memory Consistency Models and Multi-Agent Authorization Revocation}}

\author{
  Vladyslav Parakhin\\
  \textit{Senior Data Engineer}\\
  \textit{Okta}
}

\date{}
\maketitle

\begin{abstract}
The temporal assumptions underpinning conventional Identity and Access Management
collapse under agentic execution regimes.
A sixty-second revocation window---operationally negligible for a human
operator---permits on the order of $6 \times 10^3$ unauthorized API calls at
100~ops/tick; at AWS Lambda scale, the figure approaches $6 \times 10^5$.
This is a coherence problem, not merely a latency problem, and the authorization
revocation literature has largely failed to recognize it as such.

I define a \textit{Capability Coherence System} (\CCS{})---a tuple
$\langle A, C, \Sigma, \delta, \alpha, \B \rangle$---and construct a
state-mapping function $\dphi: \Sigma_{\MESI{}} \to \Sigma_{\mathrm{auth}}$
that preserves transition structure under bounded-staleness semantics.
A safety theorem (\cref{thm:rcc}) bounds unauthorized operations for the
execution-count Release Consistency-directed Coherence (\RCC{}) strategy at
$D_{\mathrm{rcc}} \leq n$, independent of agent velocity~$v$---a qualitative
departure from the $O(v \cdot \mathrm{TTL})$ scaling of time-bounded strategies.

Evaluation proceeds through tick-based discrete event simulation: three
business-contextualised scenarios, four strategies, ten deterministic seeds per
configuration (population~$\sigma$).
Eager invalidation yields $500.0 \pm 0$ unauthorised operations at
$v = 100$, $\Delta_{\mathrm{network}} = 5$; lease-based \textsc{TTL} yields
$6{,}000.0 \pm 0$; lazy check-on-use yields $2{,}400.0 \pm 0$;
\RCC{} at $n = 50$ yields $50.0 \pm 0$ ($\sigma = 0$, confirming
\cref{thm:rcc} exactly).
Under anomaly-triggered revocation, \RCC{} achieves a $184\times$ reduction
against lease \textsc{TTL} ($16.0 \pm 3.7$ vs.\ $2{,}950.8 \pm 3.6$).

\smallskip\noindent
\textbf{Contributions:}
(1)~a formal state-mapping between cache coherence and authorization revocation
semantics;
(2)~a Velocity Vulnerability metric $V_v = v \cdot \mathrm{TTL}$ establishing
agent velocity as a first-class security dimension;
(3)~an operation-bounded credential model grounded in release consistency that
enforces coherence at synchronisation boundaries;
(4)~reproducible multi-run evaluation with published source code at \url{https://github.com/hipvlady/prizm}.
\end{abstract}

\smallskip\noindent
\textbf{Subjects:} Multi-Agent Systems (cs.MA); Cryptography and Security (cs.CR);
Distributed, Parallel, and Cluster Computing (cs.DC)

\newpage
\tableofcontents
\newpage

\section{The Coherence Framing}
\label{sec:intro}

The agent has already executed 47 unauthorised API calls.
The authority revoked its credential at $t_r = 0$.
The \textsc{TTL} expires at $t = 60$.
Nothing in the current authorisation stack detects this.

A compromised agentic system operating under \textsc{TTL}-based credential
management accumulates $V_v = v \cdot \mathrm{TTL}$ unauthorised operations
before self-invalidation---a quantity linear in both velocity and window length.
At $v = 10{,}000$~TPS (AWS Lambda scale) and a 60-second \textsc{TTL},
$V_v = 6 \times 10^5$.
The Replit incident~\cite{oso2025}---an AI coding agent that deleted a production
database---is a concrete instance of this failure class, not an edge case.
The damage bound is not incidental to the authorisation architecture; it is a
structural property of the coherence regime the system operates under.

\textsc{IAM} protocols developed for biological operators inherit implicit
assumptions: session timeouts measured in minutes, revocation windows tolerable
at human interaction rates ($\sim 1$~req/s $\to 60$ unauthorised ops), and
eventual consistency acceptable as a design trade-off.
Autonomous agents violate every one of these assumptions simultaneously.
OAuth~2.0, OIDC, and their derivatives were not designed for entities capable of
recursive delegation, high-frequency decision loops, and parallel execution across
thousands of concurrent sessions.

The OpenID Foundation~\cite{openid2025} identifies revocation across
offline-attenuated delegation chains as ``largely unsolved.''
Chan et al.~\cite{chan2024} argue that AI systems require distinct, verifiable
identities rather than repurposed human credentials.
Nagabhushanaradhya~\cite{nagabhushanaradhya2025} proposes \textsc{OIDC-A}, an
OpenID Connect extension for agent identity.
Mei et al.~\cite{mei2024} treat \textsc{LLM} agents as schedulable processes
under an \textsc{AIOS} abstraction.
I extend this reasoning to its logical conclusion: if agents are processes, they
require \textit{coherence protocols} at the authorisation layer---not merely
authentication, and not merely rate limiting.

The central argument is that authorisation revocation in multi-agent delegation
chains is operationally equivalent to cache coherence in shared-memory
multiprocessors under bounded-staleness semantics.
I formalise this equivalence, derive bounded-staleness guarantees for four
revocation strategies, and evaluate them through simulation.

\paragraph{Contributions.}
\begin{enumerate}[leftmargin=*, label=\textbf{\arabic*.}]
\item \textbf{Formal equivalence.} A Capability Coherence System (\CCS{}) is
  defined, with state-mapping function $\dphi$ from \MESI{} states to
  authorisation states, preserving transition structure under bounded-staleness
  semantics (\cref{sec:formal}).

\item \textbf{Velocity Vulnerability metric.} Damage potential of \textsc{TTL}-based
  approaches is formalised as $V_v = v \cdot \mathrm{TTL}$, with proof that this
  bound is velocity-dependent while operation-count bounds are
  velocity-independent (\cref{sec:formal}).

\item \textbf{Operation-bounded credential model.} The OpenID Foundation's
  execution-count proposal~\cite{openid2025} is grounded in release consistency
  theory~\cite{sorin2020}, demonstrating formal equivalence to acquire/release
  synchronisation primitives at coherence boundaries (\cref{sec:arch}).

\item \textbf{Reproducible evaluation.} Four strategies evaluated across three
  scenarios with multi-run statistical aggregation (10 runs, seeds 0--9):
  $120\times$ reduction in unauthorised ops for execution-count vs.\ \textsc{TTL}
  in the \textsc{CRM} scenario; $184\times$ in the anomaly scenario
  (\cref{sec:eval}).
\end{enumerate}

\section{Theoretical Foundations}
\label{sec:foundations}

\subsection{Cache Coherence Essentials}

Sorin, Hill, and Wood~\cite{sorin2020} define coherence as the requirement that
reads to a memory location return the most recently written value and that writes
serialise.
The \MESI{} protocol---four stable states (Modified, Exclusive, Shared,
Invalid) with well-defined transitions---is the canonical implementation.
Between stable states, transient states~\cite[Ch.6, \S6.4.1]{sorin2020}
(notated $XYZ$: ``was $X$, transitioning to $Y$, awaiting event $Z$'') model
in-flight operations; their duration is precisely the damage window in my
authorisation analogy.

Coherence strategies bifurcate into two classes~\cite[Ch.2, \S2.3]{sorin2020}:
\textit{consistency-agnostic} (\textsc{SWMR}-enforcing, synchronous
invalidation) and \textit{consistency-directed} (relaxed, bounded-staleness).
\textsc{GPU} architectures~\cite{singh2013,alsop2016} adopt the latter,
implementing temporal coherence (lease-based self-invalidation) and release
consistency (synchronisation-point coherence).
The authorisation strategies evaluated in \cref{sec:eval} map onto these classes
exactly.

\subsection{Failure Containment}

Reason's Swiss Cheese Model~\cite{reason2000} holds that failure occurs when
independent defensive layers share aligned failure modes.
Vijayaraghavan et al.~\cite{vijayaraghavan2026} apply this to multi-agent
reliability, demonstrating that cascaded critique layers with orthogonal failure
modes catch 92.1\% of errors.
The alignment problem for \textsc{TTL}-based authorisation is structurally
identical: the failure mode (temporal window) aligns precisely with the threat
vector (operational velocity).
Operation-count bounds are orthogonal to velocity---they create the barrier the
Swiss Cheese model demands.

\subsection{Information-Theoretic Framing}

Shannon~\cite{shannon1948} established that reliable communication over a noisy
channel requires redundancy at the cost of effective bandwidth.
The authorisation channel between authority and agents is noisy (delayed, lossy,
potentially Byzantine).
Coherence enforcement overhead---re-validation checks, heartbeat signals---is
the authorisation-layer equivalent of error-correcting codes: a throughput cost
extracted in exchange for revocation reliability.
The \RCC{} overhead formula
$\mathrm{Overhead}_{\RCC{}} = \Delta_{\mathrm{revalidation}} / n$
(\cref{sec:eval:cost}) makes this correspondence numerically precise.

\section{Formal Model}
\label{sec:formal}

\subsection{Capability Coherence System}

\begin{definition}[Capability Coherence System]
\label{def:ccs}
A \emph{Capability Coherence System} (\CCS{}) is a tuple
$\langle A, C, \Sigma, \delta, \alpha, \B \rangle$ where:
\begin{itemize}[noitemsep]
  \item $A = \{a_1, \ldots, a_m\}$ is a finite set of agents;
  \item $C = \{c_1, \ldots, c_k\}$ is a finite set of capabilities
        (analogous to memory locations);
  \item $\Sigma = \{M, E, S, I\}$ is the set of stable authorisation states;
  \item $\delta: \Sigma \times \E \to \Sigma$ is the state transition function
        over events $\E = \{\mathit{grant}, \mathit{revoke}, \mathit{delegate},$
        $\mathit{introspect}, \mathit{exhaust}, \mathit{expire}\}$;
  \item $\alpha: A \times C \to \Sigma$ maps each agent-capability pair to its
        current authorisation state;
  \item $\B: \Sigma \to 2^{\mathcal{O}}$ is the permitted operations function
        over operations $\mathcal{O}$.
\end{itemize}
\end{definition}

Capability validity maps to cache line validity: $s \in \{M, E, S\}$ is a cached
block with valid data; $I$ is an invalidated line---no read or write proceeds
without a coherence fill from the authority.
Formally, $\B(I) = \emptyset$, mirroring the hardware constraint precisely.

\paragraph{SWMR Adaptation.}
For any capability $c$ at logical time $t$:
\[
  \bigl|\{a \in A : \alpha_t(a, c) = M\}\bigr| \leq 1.
\]
At most one agent holds delegation-capable state simultaneously; multiple agents
may occupy $S$ concurrently.
This mirrors the Single-Writer-Multiple-Reader invariant
of~\cite[Ch.2, \S2.4]{sorin2020}.

\begin{definition}[Authorisation State Machine]
\label{def:asm}
The authorisation state machine
$\M = (\Sigma \cup \Sigma_T, \E, \delta)$ extends \cref{def:ccs} with transient
states $\Sigma_T = \{EIA, SIA, MIC, ISG, IED\}$ following the $XYZ$ notation
of~\cite[Ch.6]{sorin2020}.
\end{definition}

Valid transitions are enumerated in \cref{tab:transitions}.

\begin{table}[h]
\centering
\caption{Valid transitions of the authorisation state machine.}
\label{tab:transitions}
\small
\begin{tabular}{@{}llll@{}}
\toprule
\textbf{From} & \textbf{Event} & \textbf{To} & \textbf{Semantics} \\
\midrule
$I$ & \textit{grant\_shared}    & $S$   & Capability issued (role-based) \\
$I$ & \textit{grant\_exclusive} & $E$   & JIT credential issued \\
$E$ & \textit{delegate}         & $M$   & Agent sub-delegates \\
$E$ & \textit{revoke}           & $EIA$ & Revocation in-flight, ACK pending \\
$EIA$ & \textit{ack}            & $I$   & Revocation confirmed \\
$S$ & \textit{revoke}           & $SIA$ & Revocation in-flight (shared) \\
$M$ & \textit{revoke\_cascade}  & $MIC$ & Cascade revocation, sub-agents pending \\
$MIC$ & \textit{all\_acks}      & $I$   & All delegees confirmed \\
$S$ & \textit{exhaust}          & $I$   & Operation count depleted (\RCC{}) \\
$S$ & \textit{expire}           & $I$   & \textsc{TTL} expired (temporal coherence) \\
\bottomrule
\end{tabular}
\end{table}

Invalid transitions---$I \to M$ (delegation without prior grant), $S \to M$
(bypass of exclusivity upgrade)---are rejected by~$\delta$.
The transient states are not cosmetic: the duration of $EIA$ and $MIC$ is
precisely the interval over which unauthorised operations accumulate.
Formal analyses that collapse transient states into atomic transitions
systematically underestimate the damage window.

\subsection{State Mapping Function}

\begin{definition}[State Mapping]
\label{def:phi}
Let $\Sigma_{\MESI{}} = \{M_{hw}, E_{hw}, S_{hw}, I_{hw}\}$~\cite{sorin2020}
and $\Sigma_{\mathrm{auth}} = \{M, E, S, I\}$.
The mapping $\dphi: \Sigma_{\MESI{}} \to \Sigma_{\mathrm{auth}}$ is:
\begin{align*}
  \dphi(M_{hw}) &= M \quad \text{(write-capable $\mapsto$ delegate-capable)} \\
  \dphi(E_{hw}) &= E \quad \text{(exclusive read $\mapsto$ sole credential holder)} \\
  \dphi(S_{hw}) &= S \quad \text{(shared read $\mapsto$ role-based pooled access)} \\
  \dphi(I_{hw}) &= I \quad \text{(invalid $\mapsto$ revoked)}
\end{align*}
\end{definition}

\begin{figure}[t]
  \centering
  \includegraphics[width=\textwidth]{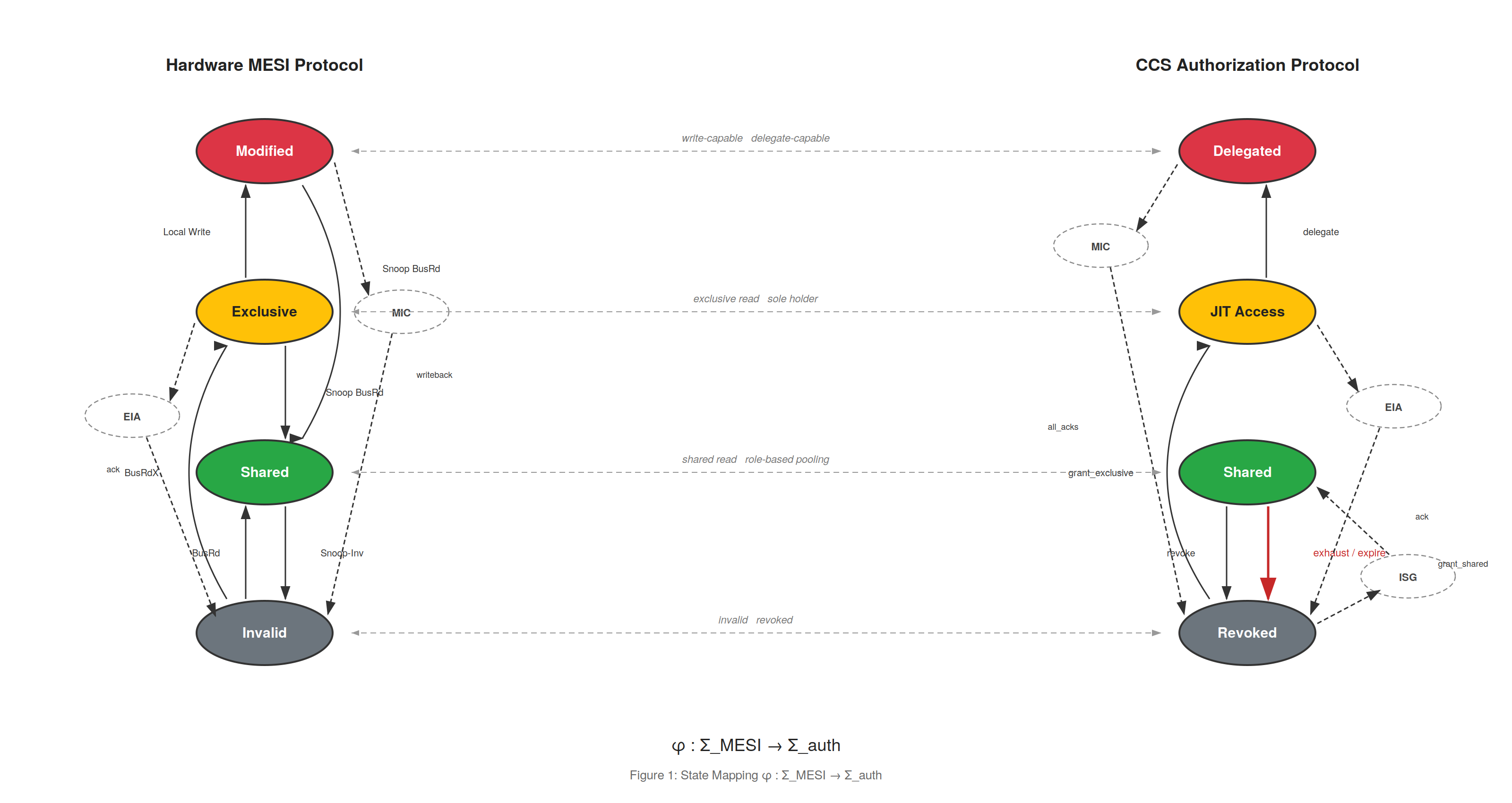}
  \caption{State mapping $\dphi: \Sigma_{\MESI{}} \to \Sigma_{\mathrm{auth}}$.
    Left: hardware \MESI{} state machine with transition triggers
    (\textit{BusRd}, \textit{BusRdX}, \textit{Snoop-Inv}).
    Right: authorisation state machine with corresponding triggers
    (\textit{grant}, \textit{delegate}, \textit{revoke}).
    Transient states as dashed nodes ($EIA$, $MIC$, $ISG$).
    Event correspondences labelled on connecting arrows.}
  \label{fig:phi}
\end{figure}

\begin{claim}[Structural Equivalence]
\label{clm:equiv}
The mapping $\dphi$ preserves transition structure: for every valid transition
$(s_1, e_{hw}, s_2)$ in the hardware \MESI{} protocol, there exists a
corresponding valid transition $(\dphi(s_1), e_{\mathrm{auth}}, \dphi(s_2))$
in $\M$, where $e_{\mathrm{auth}}$ is the authorisation-domain event
corresponding to hardware event $e_{hw}$.
\end{claim}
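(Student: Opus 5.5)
The proof is a finite case analysis over the transitions of the standard stable-state \MESI{} protocol of~\cite{sorin2020}. I fix an explicit event correspondence $\psi$ from hardware events to authorisation events and then check, for each hardware transition $(s_1,e_{hw},s_2)$, that $(\dphi(s_1),\psi(e_{hw}),\dphi(s_2))$ is a valid transition of $\M$. Since $\dphi$ is the identity on state labels, the check reduces to verifying that $\M$ contains a transition between the same pair of labels with a matching event.

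\textbf{Step 1: enumerate the hardware transitions.} I partition them into three groups.
\begin{itemize}[noitemsep]
  \item Fill transitions: $I_{hw}\to S_{hw}$ on \textit{BusRd} with sharers, $I_{hw}\to E_{hw}$ on \textit{BusRd} with no sharers, and $I_{hw}\to M_{hw}$ on \textit{BusRdX}.
  \item Upgrade transitions: $E_{hw}\to M_{hw}$ on a silent write, and $S_{hw}\to M_{hw}$ on \textit{BusUpgr}.
  \item Invalidation and downgrade transitions: $E_{hw},S_{hw},M_{hw}\to I_{hw}$ on \textit{Snoop-Inv} or eviction, and $M_{hw},E_{hw}\to S_{hw}$ on a snooped \textit{BusRd}.
\end{itemize}
Self-loops such as read hits are mapped to operations in $\B(s)$ rather than to state changes, so they hold trivially.

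\textbf{Step 2: define the correspondence.} I set $\psi(\textit{BusRd})=\mathit{grant\_shared}$ or $\mathit{grant\_exclusive}$ according to the sharer signal, $\psi(\text{write on }E)=\mathit{delegate}$, and $\psi(\textit{Snoop-Inv})=\mathit{revoke}$ for $E$ and $S$ and $\mathit{revoke\_cascade}$ for $M$. Capacity evictions map to $\mathit{exhaust}$ or $\mathit{expire}$. Each of these pairs then appears in \cref{tab:transitions}, with one caveat. The invalidation entries go through the transient states $EIA$, $SIA$, $MIC$ of \cref{def:asm}, so I read ``valid transition in $\M$'' as a path $s_1\to s_T\to I$ whose composite event is $\psi(e_{hw})$ followed by $\mathit{ack}$ or $\mathit{all\_acks}$. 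This matches how atomic \MESI{} transitions decompose into transient ones in~\cite[Ch.6]{sorin2020}. The table as printed lists no exit from $SIA$, so I will add $SIA\xrightarrow{\mathit{ack}}I$ by symmetry with $EIA$.

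\textbf{Step 3 (main obstacle): the problem cases.} The hardware transitions $I_{hw}\to M_{hw}$ and $S_{hw}\to M_{hw}$ are valid in \MESI{}, but the text explicitly says $\delta$ rejects $I\to M$ and $S\to M$. The claim is therefore false as literally stated unless it is restricted. I would first try to realise each as a composite path: $I\to E\to M$ via $\mathit{grant\_exclusive}$ followed by $\mathit{delegate}$, and $S\to I\to E\to M$ via a $\mathit{revoke}$/$\mathit{exhaust}$, $\mathit{grant\_exclusive}$, $\mathit{delegate}$ sequence. This mirrors \textit{BusRdX} as ``acquire exclusive, then write''. The theorem would then be stated as preservation up to finite compositions of $\M$-transitions, i.e.\ a simulation relation rather than a strict homomorphism.

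The downgrades $M_{hw},E_{hw}\to S_{hw}$ have no counterpart in \cref{tab:transitions}. For these I would either extend $\delta$ with a $\mathit{share}$/\textit{introspect}-driven downgrade or weaken the claim to cover only the transitions that \cref{tab:transitions} is designed to mirror. I expect this reconciliation of the rejected and missing edges to be the substantive part of the proof. The rest is table lookup.
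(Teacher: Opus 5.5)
Your case analysis is the same strategy as the paper's proof sketch, and your diagnosis is correct. The sketch gives only an event dictionary (\textit{BusRd}$\mapsto\mathit{grant\_shared}$, \textit{BusRdX}$\mapsto\mathit{grant\_exclusive}$, \textit{Snoop-Invalidate}$\mapsto\mathit{revoke}$, \textit{Write-back}$\mapsto\mathit{delegate}$) and asserts that exhaustive enumeration succeeds. That dictionary does not match endpoints:
\begin{itemize}[noitemsep]
\item \textit{BusRdX} takes $I_{hw}$ to $M_{hw}$, but $\mathit{grant\_exclusive}$ lands in $E$;
\item \textit{BusRd} with no sharers lands in $E_{hw}$, but $\mathit{grant\_shared}$ lands in $S$;
\item a write-back leaves $M_{hw}$, but $\mathit{delegate}$ leaves $E$;
\item $\mathit{revoke}$ from $S$ reaches only $SIA$, which has no exit.
\end{itemize}
Your re-indexed correspondence is the right repair for the cases that can be matched. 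That means splitting \textit{BusRd} on the sharer signal, sending the silent store on $E_{hw}$ to $\mathit{delegate}$, and reading invalidations as paths through $EIA$, $SIA$, $MIC$. Your counterexamples are genuine. $I_{hw}\to M_{hw}$ and $S_{hw}\to M_{hw}$ are explicitly rejected by the paper's own $\delta$, and $M_{hw},E_{hw}\to S_{hw}$ fail because $\M$ has no edge into $S$ except from $I$. So the claim as literally stated is false and has no proof. These edges also contradict the paper's remark that the hardware model is a strict subset of the authorisation model.

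The gap is in your repair: you never fix which relation you are proving, and the two notions you use give opposite verdicts.
\begin{itemize}[noitemsep]
\item In Step~2, a hardware step is matched by one $\psi(e_{hw})$-step followed only by $\mathit{ack}$ or $\mathit{all\_acks}$. This is a weak simulation with acknowledgements as internal moves. Under it, $I\to M$, $S\to M$ and both downgrades stay unmatched.
\item In Step~3 you allow arbitrary finite compositions. But the graph on $\{M,E,S,I\}$ obtained from \cref{tab:transitions} by collapsing transient states is strongly connected. Every state reaches $I$ (via $\mathit{exhaust}$, $EIA$ or $MIC$), and $I$ reaches $S$, $E$ and, through $E$, $M$. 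So under this notion every map into $\{M,E,S,I\}$ ``preserves transition structure''. That includes the downgrades you say lack a counterpart, e.g.\ $M\to MIC\to I\to S$.
\end{itemize}
The weakened claim is therefore either false or vacuous. Calling it a simulation relation is also inaccurate, since simulation matches labels.

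To get a statement with content, do the following:
\begin{itemize}[noitemsep]
\item Assign each hardware event a fixed word of authorisation events, with only acknowledgements internal (e.g.\ \textit{BusRdX}$\mapsto\mathit{grant\_exclusive}\cdot\mathit{delegate}$).
\item State a stuttering convention for self-loops. \Cref{tab:transitions} has none, so mapping read hits into $\B(s)$ is itself a weakening.
\item Say explicitly what happens to the upgrade and the downgrades. Either they need new edges in $\delta$, against the paper's explicit rejection of $S\to M$, or they are matched only by detours through $I$ that the hardware never takes.
\end{itemize}
There is also a small slip in Step~2. Capacity evictions from $E_{hw}$ or $M_{hw}$ cannot map to $\mathit{exhaust}$ or $\mathit{expire}$, because \cref{tab:transitions} provides those only out of $S$. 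Route them through $EIA$ and $MIC$ instead.
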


\begin{proof}[Proof sketch]
Correspondence verified by exhaustive enumeration against~\cite[Table~6.3]{sorin2020}.
Hardware \textit{BusRd} maps to \textit{grant\_shared}; \textit{BusRdX} to
\textit{grant\_exclusive}; \textit{Snoop-Invalidate} to \textit{revoke};
\textit{Write-back} to \textit{delegate} (scope attenuation).
\textsc{SWMR} is preserved by construction (\cref{def:ccs}).
\end{proof}

I use \textit{structural equivalence} rather than \textit{isomorphism} because
the authorisation domain introduces transitions absent from hardware \MESI{}---
notably \textit{exhaust} and \textit{expire}---and imposes the additional
constraint $\mathit{child.scope} \subseteq \mathit{parent.scope}$.
The hardware model is a strict subset of the authorisation model, which is
architecturally convenient: known hardware bounds carry over as lower-bound
estimates for authorisation.

\paragraph{Scope of equivalence.}
\CCS{} establishes an \emph{operational} equivalence under bounded-staleness
semantics, not a semantic isomorphism.
Hardware \MESI{} guarantees memory visibility through physical interconnect
properties; \CCS{} achieves analogous authorisation visibility through
protocol-level strategy constraints.
Domain-specific extensions---trust scoring, delegation \textsc{DAG}s, scope
attenuation---extend the authorisation model beyond the hardware domain.
\CCS{} is a coherence-inspired authorisation protocol; it is not a full \MESI{}
instance.

\subsection{Damage Bound Functions}

\begin{definition}[Velocity Vulnerability]
\label{def:vv}
For agent velocity $v$~(ops/tick) and credential \textsc{TTL} $\Delta t$~(ticks):
\[
  V_v(v, \Delta t) = v \cdot \Delta t.
\]
\end{definition}

The significance is not the arithmetic---which is elementary---but that agent
velocity emerges as a first-class security parameter: a dimension absent from
human-centric \textsc{IAM} literature.

\begin{definition}[Damage Bound]
\label{def:dbound}
For revocation strategy $\pi$, the \emph{damage bound}~$D_\pi$ is the maximum
unauthorised operations executable after the authority initiates revocation:

\begin{center}
\footnotesize
\setlength{\tabcolsep}{4pt}
\begin{tabular}{@{}p{4.0cm}p{5.8cm}p{5.2cm}@{}}
\toprule
\textbf{Strategy} $\pi$ & $D_\pi$ & \textbf{Bound type} \\
\midrule
Eager (consistency-agnostic) & $D_{\mathrm{eager}} \leq v \cdot \Delta_{\mathrm{network}}$ & Time-bounded, velocity-dependent \\[2pt]
Lazy (check-on-use) & $D_{\mathrm{lazy}} \leq v \cdot (\Delta_{\mathrm{network}} + \Delta_{\mathrm{check}})$ & Time-bounded, velocity-dependent \\[2pt]
Lease (temporal coherence) & $D_{\mathrm{lease}} \leq v \cdot \mathrm{TTL}$ & Time-bounded, velocity-dependent \\[2pt]
Exec-count (\RCC{}) & $D_{\mathrm{rcc}} \leq n$ & \textbf{Operation-bounded, velocity-independent} \\
\bottomrule
\end{tabular}
\end{center}
\end{definition}

\begin{theorem}[\RCC{} Safety Bound]
\label{thm:rcc}
In a \CCS{} with execution-count strategy and budget~$n$, the maximum
unauthorised operations per capability after revocation initiation is bounded
by~$n$, independent of agent velocity~$v$.
\end{theorem}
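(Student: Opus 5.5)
The proof is a counting argument on the per-capability execution counter, carried out in the authorisation state machine $\M$ of \cref{def:asm}. First I fix the model of the execution-count strategy: when a capability $c$ is granted to agent $a$, the transition $I \xrightarrow{\mathit{grant}} S$ (or $E$) installs a local counter $k_{a,c} := n$. Every operation $o \in \B(\alpha(a,c))$ executed under $c$ decrements $k_{a,c}$ by one. When $k_{a,c}=0$, the \textit{exhaust} transition $S \to I$ fires, and after it $\B(I)=\emptyset$ blocks all further operations. The one structural assumption I make explicit is that re-entering a valid state requires a fresh grant, i.e.\ a coherence fill from the authority. This follows from \cref{tab:transitions}: the only transitions out of $I$ are \textit{grant\_shared} and \textit{grant\_exclusive}, both issued by the authority.

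The first step is an invariant. For every agent--capability pair and every tick, the number of operations executed since the most recent grant equals $n - k_{a,c} \le n$. This holds by induction on the event sequence. A grant resets the count to $0$. An operation increments it by one, but only while $k_{a,c}\ge 1$, since at $k_{a,c}=0$ the state is already $I$. The remaining events (\textit{revoke}, \textit{expire}, \textit{ack}) can only move the pair toward $I$ and never raise $k_{a,c}$.

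The second step uses the revocation time $t_r$. At $t_r$ the authority stops issuing grants for $c$, so no refill occurs after $t_r$. Any unauthorised operation, meaning one executed at $t \ge t_r$, therefore lies within the single epoch that began at the last grant before $t_r$. By the invariant, at most $k_{a,c}(t_r) \le n$ such operations can occur before exhaustion. This holds whether or not the asynchronous revocation message has arrived: the transient state $SIA$ can shorten the window but cannot extend it. Velocity $v$ only controls how many ticks it takes to consume the residual budget, and the count itself is unaffected, so the bound is independent of $v$ (and of $\Delta_{\mathrm{network}}$ and TTL). Tightness follows from an agent holding a freshly granted credential at $t_r$ that never receives the revoke, which executes exactly $n$ operations; this matches the simulated $50.0\pm0$.

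The main obstacle is pinning down ``per capability'' when there are several holders and delegation. With $m$ agents in $S$, or with an $M$ holder delegating to sub-agents, the naive total is $m\cdot n$, not $n$. I would first try to read the theorem as a bound per agent--capability credential, which is what the invariant gives directly. To get a genuinely per-capability bound, I would add the assumption that delegation splits rather than copies the budget: the child's budget is drawn from the parent's counter, mirroring $\mathit{child.scope}\subseteq\mathit{parent.scope}$. The sum of outstanding counters over the delegation tree is then conserved and at most $n$, and the counting argument applies to that sum. I would state this budget-attenuation condition explicitly as part of the execution-count strategy.
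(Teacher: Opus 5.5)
Your proposal is correct and is essentially the paper's own argument: after $t_r$ the holder can spend only the residual budget of its current epoch ($n-k\le n$ in the paper's notation, your $k_{a,c}(t_r)$), because the re-acquire at exhaustion is denied, and $v$ only changes how quickly that residual is consumed. Your induction invariant, your tightness example, and your reading of ``per capability'' as per agent--capability credential make the paper's sketch more rigorous; that reading is also how the paper's Remark treats delegated holders, namely with independent budgets. Note that \cref{tab:transitions} lists \textit{exhaust} only from $S$, so your model must add that transition for $E$ and $M$ holders. Your budget-splitting condition is an optional strengthening that the paper does not use.
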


\begin{proof}[Proof sketch]
Let $t_r$ be the authority's revocation time for capability~$c$ held by agent~$a$.
Let $k$ be operations $a$ has executed against~$c$ since the most recent acquire.
At budget exhaustion ($k = n$, the release point), $a$ must contact the authority
for a fresh credential (acquire).
The authority, having revoked~$c$ at $t_r$, denies the acquire.
Maximum unauthorised operations after~$t_r$ are $n - k \leq n$.
Critically, the bound is on \emph{operations}, not time: a faster agent exhausts~$n$
sooner but cannot exceed~$n$.
The proof holds regardless of~$v$.
\end{proof}

\begin{remark}[\cref{thm:rcc} scope]
The bound $D_{\mathrm{rcc}} \leq n$ applies \emph{per capability}.
In delegation chains where multiple agents hold distinct capabilities with
independent budgets, aggregate unauthorised operations across the cascade may
vary with stochastic scheduling---the banking scenario ($\sigma = 13.0$ for
exec-count, \cref{sec:eval:banking}) is a concrete illustration---but each
individual capability respects the deterministic bound.
Zero bound violations across all 120 experimental runs
($4 \times 10 \times 3$ configurations) confirm this without exception.
\end{remark}

\begin{remark}[Distinction from rate limiting]
A rate limiter caps throughput; \RCC{} \emph{invalidates the credential itself}
at the operation boundary.
Under rate limiting, a revoked agent is throttled but retains a credential it
believes valid.
Under \RCC{}, the agent discovers revocation at the synchronisation boundary
and transitions $\alpha(a, c) \to I$.
The agent cannot resume without a fresh grant---this is acquire semantics.
The behavioural distinction is observable in \cref{tab:crm}: \RCC{} is the only
strategy that forces re-acquisition ($1.0 \pm 0$ revalidations) rather than
silently expiring state.
\end{remark}

\begin{corollary}
\label{cor:eager}
For eager strategy: $D_{\mathrm{eager}} = v \cdot \Delta_{\mathrm{network}}$.
At $v = 100$, $\Delta_{\mathrm{network}} = 5$: $D_{\mathrm{eager}} = 500$.
Eager invalidation approaches zero unauthorised operations only when
$\Delta_{\mathrm{network}} < 1/v$---an unattainable condition for realistic
high-velocity deployments.
\end{corollary}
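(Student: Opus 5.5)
The corollary will follow from the definition of the eager strategy together with the damage-bound table in \cref{def:dbound}. The upper bound $D_{\mathrm{eager}} \leq v \cdot \Delta_{\mathrm{network}}$ is already stated there. The extra work is to show the bound is attained, so that it becomes an equality, and then to read off the numerical and asymptotic consequences.

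\textbf{Step 1: upper bound.} Fix a capability $c$ held by agent $a$, and let the authority initiate revocation at $t_r$. Under eager (consistency-agnostic) invalidation, the invalidation message is sent synchronously at $t_r$ and reaches $a$ no later than $t_r + \Delta_{\mathrm{network}}$. At that point $\alpha(a,c)$ moves through the transient state $EIA$ (or $SIA$) to $I$, as in \cref{tab:transitions}. Since $\B(I) = \emptyset$, no operation on $c$ proceeds after arrival. Between $t_r$ and arrival, $a$ executes at most $v$ operations per tick. This gives at most $v \cdot \Delta_{\mathrm{network}}$ unauthorised operations.

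\textbf{Step 2: attainment.} Consider the adversarial execution in which three things hold: the message delay is exactly $\Delta_{\mathrm{network}}$, the agent runs at full velocity $v$ in every tick of the window $[t_r, t_r + \Delta_{\mathrm{network}})$, and the agent never consults the authority on its own. Every operation in this window is unauthorised, because the authority has already revoked $c$, yet each one is permitted locally, because $\alpha(a,c) \in \{E,S\}$. The count is therefore exactly $v \cdot \Delta_{\mathrm{network}}$. Since $D_\pi$ is defined as a maximum, the bound holds with equality. Substituting $v=100$ and $\Delta_{\mathrm{network}}=5$ gives $500$.

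\textbf{Step 3: the limiting statement.} In the tick-discretised model, the number of unauthorised operations is $\lfloor v \cdot \Delta_{\mathrm{network}} \rfloor$, or zero once the product falls below one operation. So $D_{\mathrm{eager}}$ reaches zero only when $v \cdot \Delta_{\mathrm{network}} < 1$, that is, when $\Delta_{\mathrm{network}} < 1/v$. For high $v$ this requires sub-tick latency, which any real interconnect forbids; that supports the "unattainable" remark.

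\textbf{Main obstacle.} The hard part is making attainment rigorous. This requires fixing the tick semantics precisely: whether revocation arrival is processed before or after the agent's operations in the arrival tick. That choice decides between $v \cdot \Delta_{\mathrm{network}}$ and an off-by-one variant. I would adopt the convention that messages are delivered at the start of a tick, which matches the $500.0 \pm 0$ simulation figure.
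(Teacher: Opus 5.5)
Your argument is sound, and it supplies something the paper does not: there is no derivation of this corollary in the paper. The inequality $D_{\mathrm{eager}} \leq v \cdot \Delta_{\mathrm{network}}$ is simply posited in the eager row of \cref{def:dbound}. The equality and the value $500$ rest on the \textsc{CRM} run (\cref{tab:bounds}, $500.0 \pm 0$, ``Exact''). The $1/v$ condition is just $v \cdot \Delta_{\mathrm{network}} < 1$ rearranged. You instead derive the upper bound from eager semantics (bounded one-way delay, then $\B(I) = \emptyset$) and prove tightness with an explicit full-velocity execution. That is what ``maximum'' in \cref{def:dbound} actually requires, since a simulation can witness attainment but never certify the upper bound. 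What the paper's empirical route does fix is the tick convention you flag as the main obstacle, so state yours as an explicit assumption. The anomaly run's eager figure, $108 = 12 \times 9$ rather than $12 \times 10$, suggests the paper's own simulator treats the boundary tick differently there.

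Step~3 needs tightening, because $\lfloor v \cdot \Delta_{\mathrm{network}} \rfloor$ is not the maximum you used in Step~2. With integer ticks the count is just $v \cdot \Delta_{\mathrm{network}}$. With fractional latency, the worst case over phase alignment is $\lceil v \cdot \Delta_{\mathrm{network}} \rceil \geq 1$ for every $\Delta_{\mathrm{network}} > 0$, which is how the paper itself reads the corollary in \cref{sec:coherence}. Your floor is the best-case alignment, hence a lower bound on $D_{\mathrm{eager}}$. So the ``only when'' direction survives, because $D_{\mathrm{eager}} = 0$ forces $\lfloor v \cdot \Delta_{\mathrm{network}} \rfloor = 0$. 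The implied converse, zero damage once $v \cdot \Delta_{\mathrm{network}} < 1$, fails for the maximum.

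Finally, ``unattainable'' should rest on $\Delta_{\mathrm{network}} < 1/v$ requiring revocation to outrun a single agent operation. It should not rest on ticks being physically indivisible, since a tick is an abstract scheduling unit (10~ms in the paper's own example).
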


\section{Related Work}
\label{sec:related}

\subsection{Multi-Agent Failure Modes}

Building on ReAct~\cite{yao2023} and AutoGen~\cite{wu2023}---which document the
transition from singular \textsc{LLM}s to compound agent systems---neither
framework treats the authorisation state of individual agents, a gap with direct
security consequences.
Chain-of-Thought~\cite{wei2022} improves reasoning reliability without touching
authorisation state.
Cemri et al.~\cite{cemri2025} provide the most comprehensive failure taxonomy to
date: \textsc{MAST} (1,600+ traces, seven frameworks) identifies fourteen failure
modes across three categories.
Their ``inter-agent misalignment''---agents diverging from shared
expectations---maps directly to the coherence violation I formalise: an agent in
perceived state~$S$ while the authority has moved it to~$I$.
This is stale read semantics at the authorisation layer.

\subsection{Organisational Reliability Models}

Building on~\cite{vijayaraghavan2026}---the ``Team of Rivals'' architecture---I
extend their organisational metaphor from semantic coherence
(Generator/Critic separation) to security coherence
(Execution/Authorisation separation).
Huang et al.~\cite{huang2024} provide empirical grounding: hierarchical oversight
yields 5.5\% performance degradation versus 23.7\% for flat topologies, and their
Inspector mechanism recovers 96.4\% of faulty agent errors.
The Authority Service in \CCS{} plays the Inspector role at the authorisation
layer.
The 2--10\% \RCC{} overhead (\cref{sec:eval:cost}) is commensurate with the
Huang et al.\ benchmark.

\subsection{Trust and Security Governance}

Raza et al.~\cite{raza2025} adapt \textsc{TRiSM} for agentic multi-agent
systems, defining monitoring metrics (Component Synergy Score, Tool Utilisation
Efficacy).
Their framework specifies \emph{what} to monitor; \CCS{} delineates the
\emph{enforcement mechanism}: a declining trust score triggers a concrete state
transition $S \to I$ in the state machine.
The \textsc{CSA} framework~\cite{csa2025} articulates ``Continuous
Authorisation''---privileges evaluated continuously rather than granted once at
session initiation.
My architecture instantiates this as a protocol layer through which trust signals
become coherence actions.

\subsection{Delegation and Authorisation Standards}

Schwenkschuster et al.~\cite{schwenkschuster2024} formalise cross-domain identity
chaining; South et al.~\cite{south2025} provide cryptographic delegation
primitives.
Both address chain \emph{establishment}.
I address chain \emph{teardown}---the coherent propagation of revocation across
delegation \textsc{DAG}s.
Establishment is the easy half of the problem.

For scope attenuation, two approaches have emerged: online
(\textsc{RFC}~8693 Token Exchange~\cite{campbell2020}) and offline
(Biscuits~\cite{biscuit2024}, Macaroons~\cite{birgisson2014}).
Both are \emph{stateless} scope management mechanisms.
\CCS{} provides the complementary \emph{stateful} coherence layer: even when
delegation is correctly authenticated and attenuated, temporal consistency of
revocation remains---as~\cite{openid2025} acknowledges---``largely unsolved.''
The two mechanisms are composable, not competing.

\section{Authorisation as a Coherence Problem}
\label{sec:coherence}

In centralised systems, authorisation is atomic: a single database check yields
a binary, current result.
In distributed agent systems, authorisation is \emph{cached}---every token and
signed assertion is a cached copy of a permission that existed at~$t_0$.
The distributed authorisation literature has addressed the performance
implications of this caching but has largely neglected the coherence implications.

The difficulty of securing agentic systems stems from treating the stale-token
problem as a latency problem when it is structurally a coherence problem.
In Sorin et al.'s taxonomy~\cite{sorin2020}, coherence protocols manage the
propagation of writes (revocations) to all readers (agents).
The ``write'' is a revocation event; the ``readers'' are agents holding cached
credentials.
The coherence strategy determines how quickly and at what cost the write propagates.

\subsection{Consistency-Agnostic (Eager) Revocation}

This class enforces \textsc{SWMR} synchronously: the authority invalidates all
held tokens before acknowledging the revocation.
In hardware terms, this is bus-based snooping.
The fragility is structural, not incidental: a single offline agent causes the
process to hang or fail open.
Cemri et al.'s \textsc{MAST} taxonomy~\cite{cemri2025} documents ``system design
issues'' as a primary failure category; synchronous revocation introduces
precisely this mode at the authorisation layer.
\Cref{cor:eager} adds a further constraint: eager strategies do not achieve zero
unauthorised operations when $\Delta_{\mathrm{network}} > 0$---at $v = 100$,
the 5-tick propagation window admits 500 unauthorised operations regardless of
broadcast aggressiveness.

\subsection{Consistency-Directed (Relaxed) Revocation}

The system accepts asynchronous revocation propagation under bounded staleness.
The question---where the interesting design space lies---is \emph{what bounds the
staleness}.
Temporal coherence (\textsc{TTL}) bounds staleness by time, creating the
$O(v \cdot \mathrm{TTL})$ vulnerability of \cref{def:vv}.
Release consistency (execution-count) bounds staleness by operations, eliminating
velocity dependence by \cref{thm:rcc}.
The qualitative departure between these two bound types is the central empirical
claim of this paper.

\section{The MESI Mapping}
\label{sec:mesi}

I instantiate $\dphi$ from \cref{def:phi} on the agent authorisation lifecycle
(\cref{fig:phi}).
Each stable state carries a concrete operational semantics:

\paragraph{Modified $\to$ Delegated Authority.}
The agent holds delegation rights---in Biscuit/Macaroon terms~\cite{biscuit2024,birgisson2014},
it can append attenuation caveats.
Revocation from~$M$ requires cascade notification:
$M \xrightarrow{\mathit{revoke\_cascade}} MIC$.
The in-flight $MIC$ interval is where cascade damage accumulates.

\paragraph{Exclusive $\to$ JIT Access.}
Single-use credential; no sharing.
Maps to \textsc{CSA} ``Just-in-Time Access''~\cite{csa2025}.
The credential exists for exactly the task duration---the shortest possible
$D_\pi$ window for time-bounded strategies.

\paragraph{Shared $\to$ Role-Based Pooling.}
Multiple agents hold read-only copies simultaneously
(e.g., ``Read S3 Bucket'').
Default state for most OAuth access tokens.
Staleness risk is multiplicative: all copies are stale until individually
invalidated.

\paragraph{Invalid $\to$ Revoked.}
$\B(I) = \emptyset$.
The agent cannot proceed without a coherence transaction.

The transient state $EIA$---revocation sent, \textsc{ACK} pending---is where the
bulk of damage occurs in eager strategy deployments.
Formal analyses that treat revocation as instantaneous systematically
underestimate the damage window by omitting this interval.

\section{Architecture}
\label{sec:arch}

\subsection{Execution-Count Bounds as Release Consistency}

In \textsc{GPU} release consistency~\cite[Ch.10, \S10.1.4]{sorin2020,alsop2016},
coherence is enforced at synchronisation points: \textit{acquire} loads fresh
state, \textit{release} commits and exposes writes.
Between these boundaries, the processor operates on its local cache without
coherence traffic.

An operation budget~$n$ is issued with each credential.
Each operation decrements a counter.
At $k = n$---budget exhaustion, the release point---the agent contacts the
authority for a fresh credential (acquire).
If the capability was revoked during the $n$-operation critical section, the
acquire is denied.
By \cref{thm:rcc}, the damage is bounded at~$n$ regardless of~$v$.

The OpenID Foundation~\cite{openid2025} independently proposes ``credentials
constrained by execution counts.''
I demonstrate this is formally equivalent to release/acquire cycles: each
$n$-operation block is a critical section bounded by acquire (credential refresh)
and release (budget exhaustion).
To my knowledge, this equivalence has not been stated in the authorisation
literature; it is the key theoretical contribution enabling transfer of hardware
coherence bounds to the authorisation domain.

\subsection{Authority Service}

The Authority Service combines a Policy Decision Point (\textsc{PDP})~\cite{hu2014}
with a coherence directory controller:

\begin{enumerate}[leftmargin=*, noitemsep]
\item \textbf{Capability Registry.}
  Ground truth of all active capabilities, delegation \textsc{DAG}s, and scope
  attenuation chains.
  Enforces $\mathit{child.scope} \subseteq \mathit{parent.scope}$ as a hard
  invariant.
  (Under network partition, the registry may diverge from agents' cached views;
  see \cref{sec:limitations}.)

\item \textbf{Trust Scorer.}
  Implements \textsc{CSA} continuous authorisation~\cite{csa2025} and Raza
  et al.'s \textsc{TRiSM} monitoring~\cite{raza2025}.
  Behavioural analytics produce dynamic trust scores; score below threshold
  $\tau < 0.4$ triggers automatic revocation:
  $\alpha(a, c) \to I \ \forall\, c \in C_a$.
  The threshold~$\tau$ is treated as scenario-fixed; sensitivity analysis across
  $\tau \in [0.2, 0.6]$ is deferred.

\item \textbf{Revocation Broadcaster.}
  Snooping-style broadcast for $\leq 25$ agents; directory-based unicast for
  larger populations~\cite[Ch.6, \S6.4.3]{sorin2020}.
  The snooping-to-directory crossover was not empirically validated in this
  study---flagged as a pending calibration item.
\end{enumerate}

\subsection{Heterogeneous Coherence}
\label{sec:arch:hetero}

Building on compound consistency models in \textsc{CPU}+\textsc{GPU}
systems~\cite[Ch.10, \S10.2]{sorin2020}, different agent classes operate under
different strategies.
Strategy assignment maps directly from the consistency class taxonomy:
high-integrity operations (financial transactions) require
consistency-agnostic semantics (eager); high-throughput operations
(bulk \textsc{CRM} updates) tolerate consistency-directed relaxation
(lease, exec-count).
Each device operates under its own consistency model; the Authority Service
provides cross-domain ordering---the global coherence controller.

\begin{table}[h]
\centering
\caption{Heterogeneous coherence strategy assignment by agent context.}
\label{tab:hetero}
\small
\begin{tabular}{@{}llll@{}}
\toprule
\textbf{Agent Context} & \textbf{Coherence Strategy} & \textbf{Damage Bound} & \textbf{Hardware Analogy} \\
\midrule
Financial ops     & Eager (consistency-agnostic) & $\leq v \cdot \Delta_{\mathrm{net}}$   & CPU SWMR \\
CRM / Bulk sync   & Lease-based (temporal)       & $\leq v \cdot \mathrm{TTL}$            & GPU Temporal \\
Analytics         & Lazy (check-on-use)          & $\leq v \cdot \Delta_{\mathrm{check}}$ & Weak ordering \\
High-velocity API & Exec-count (\RCC{})          & $\leq n$~ops                           & GPU Release Consistency \\
\bottomrule
\end{tabular}
\end{table}

\begin{figure}[t]
  \centering
  \includegraphics[width=\textwidth]{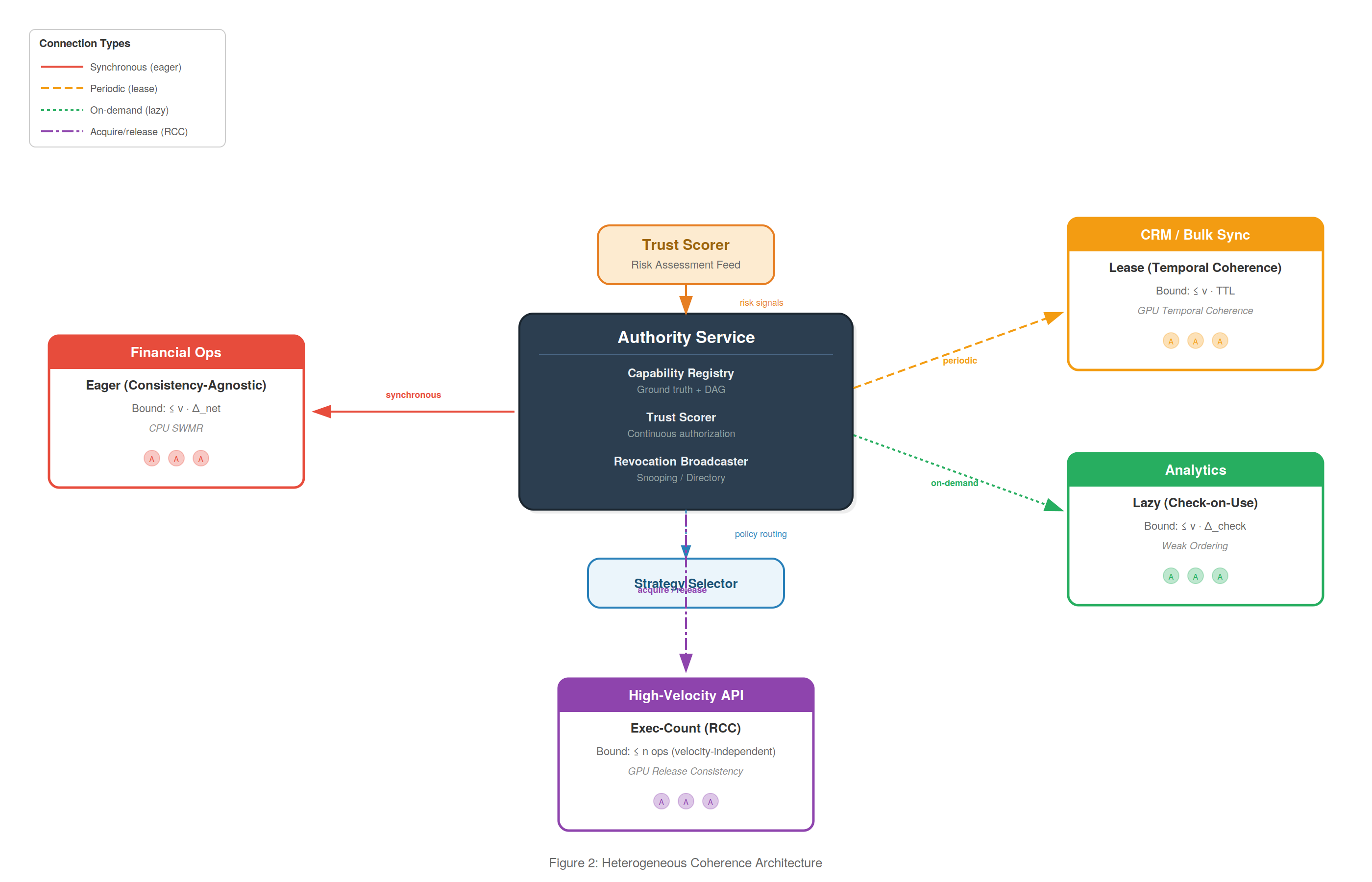}
  \caption{Heterogeneous Coherence Architecture.
    Authority Service (central) connected to four agent clusters, each labelled
    with coherence strategy.
    Message types: synchronous (eager, solid), periodic (lease, dashed),
    on-demand (lazy, dotted), acquire/release (exec-count, double).
    Trust Scorer feeds risk assessments into the strategy selector.}
  \label{fig:arch}
\end{figure}

\section{Evaluation}
\label{sec:eval}

\subsection{Simulation Methodology}
\label{sec:eval:method}

A tick-based discrete event simulator was implemented in Python~3.11.
Each tick is an indivisible scheduling unit; all operations within a tick are
logically simultaneous.
The tick abstraction isolates coherence dynamics from deployment-specific
latencies.
To map to a specific deployment, tick counts are multiplied by the deployment's
scheduling interval (e.g., one tick $= 10$~ms implies 5-tick latency $= 50$~ms).

\paragraph{Methodological caveat.}
The tick simulator introduces a minor temporal aliasing artefact when operations
cluster at tick boundaries.
In probabilistic scheduling scenarios (Banking, Anomaly), operations at $p = 0.5$
and $p = 0.7$ are resolved by independent Bernoulli draws per agent per tick.
For the deterministic \textsc{CRM} scenario ($v = 100$~ops/tick constant), tick
alignment eliminates this concern entirely and produces $\sigma = 0$ across all
seeds---which is correct but deserves explicit acknowledgment rather than being
reported as a surprising result.

\begin{algorithm}[t]
\caption{Simulation Engine}
\label{alg:sim}
\begin{algorithmic}[1]
\Require scenario config $S$, strategy $\pi$, max\_ticks $T$
\Ensure metrics $M$
\State Initialise \texttt{AuthorityService}, \texttt{AgentSet} from $S$
\State Initialise \texttt{MetricsCollector}
\For{$t = 1$ \textbf{to} $T$}
  \ForAll{agent $a \in \texttt{AgentSet}$}   \Comment{Phase 1: Agent operations}
    \State $\mathit{attempts} \gets \texttt{determine\_action\_count}(a, S, t)$
    \ForAll{attempt}
      \State $\mathit{cap} \gets \texttt{random\_capability}(a)$
      \State $\texttt{result} \gets a.\texttt{attempt\_operation}(\mathit{cap})$
      \If{$\mathit{cap}.\texttt{actually\_revoked\_at\_authority}()$}
        \State $M.\texttt{unauthorized\_count} \mathrel{+}= 1$
      \EndIf
      \If{$\pi = \RCC{}$ \textbf{and} $a.\texttt{ops\_remaining} = 0$}
        \State $a.\texttt{request\_revalidation}()$ \Comment{acquire}
      \EndIf
    \EndFor
  \EndFor
  \ForAll{pending revocation $r$}            \Comment{Phase 2: Authority processing}
    \If{$\pi = \textsc{Eager}$}   \State \texttt{broadcast\_and\_await\_ack}$(r)$
    \ElsIf{$\pi = \textsc{Lazy}$} \State \texttt{mark\_revoked}$(r)$
    \ElsIf{$\pi = \textsc{Lease}$} \State \texttt{mark\_revoked}$(r)$
    \ElsIf{$\pi = \RCC{}$}        \State \texttt{mark\_revoked}$(r)$
    \EndIf
  \EndFor
  \State \texttt{deliver\_messages}$(\mathrm{latency} = S.\texttt{latency\_ticks})$  \Comment{Phase 3: Network delivery}
  \ForAll{agent $a$}                         \Comment{Phase 4: Anomaly detection}
    \If{\texttt{trust\_scorer.check\_anomaly}$(a)$}
      \State \texttt{revoke\_all\_capabilities}$(a)$
    \EndIf
  \EndFor
  \ForAll{agent $a$ with transient state}    \Comment{Phase 5: Resolve transients}
    \If{\texttt{ack\_received}$(a)$}
      \State $a.\texttt{state} \gets \texttt{stable\_target}(a.\texttt{transient\_state})$
    \EndIf
  \EndFor
  \State $M.\texttt{record\_tick}(t, \texttt{AgentSet})$  \Comment{Phase 6: Metrics}
\EndFor
\State \Return $M.\texttt{aggregate}()$
\end{algorithmic}
\end{algorithm}

\paragraph{Multi-run aggregation.}
Each strategy-scenario configuration is executed 10 times with deterministic
seeds 0--9.
Population mean and population standard deviation~$\sigma$ (not sample~$s$) are
reported, since the 10 runs constitute the complete experimental population
under the specified seed range.
All scenario configurations are published as \textsc{YAML} files alongside the
source code at \url{https://github.com/hipvlady/prizm}.

\subsection{Scenario Configurations}
\label{sec:eval:scenarios}

\begin{table}[h]
\centering
\caption{Simulation parameters across three business-contextualised scenarios.}
\label{tab:params}
\small
\begin{tabular}{@{}llll@{}}
\toprule
\textbf{Parameter} & \textbf{Scenario 1 (Banking)} & \textbf{Scenario 2 (CRM)} & \textbf{Scenario 3 (Anomaly)} \\
\midrule
Agents             & 10               & 1               & 5 \\
Delegation depth   & 3                & 1               & 1 \\
Action model       & Prob.\ ($p=0.5$) & Det.\ (100/tick) & Prob.\ ($p=0.7$) \\
Seeds              & 0--9 (10 runs)   & 0--9 (10 runs)  & 0--9 (10 runs) \\
Network latency    & 10 ticks         & 5 ticks         & 10 ticks \\
Revocation trigger & Tick 100         & Tick 0          & Auto (trust score) \\
TTL (lease, ticks) & 120              & 60              & 3,000 \\
Exec-count $n$     & 60               & 50              & 100 \\
Lazy check interval& 40 ticks         & 23 ticks        & 10 ticks \\
Anomaly burst      & N/A              & N/A             & 12 ops/tick \\
Trust threshold $\tau$ & 0.8          & 0.8             & 0.4 \\
Trust decay        & 0.3              & 0.3             & 0.5 \\
Duration (ticks)   & 200              & 120             & 300 \\
\bottomrule
\end{tabular}
\end{table}

\subsection{Scenario 1 --- Banking Cascade Revocation}
\label{sec:eval:banking}

A payment processing agent is compromised at tick~100.
A three-level delegation chain (User $\to$ A $\to$ B $\to$ C) requires cascade
revocation.
Probabilistic scheduling ($p = 0.5$) produces genuine cross-seed variance.

\begin{table}[h]
\centering
\caption{Banking cascade results (mean $\pm$ $\sigma$, 10 runs, seeds 0--9).}
\label{tab:banking}
\small
\begin{tabular}{@{}lllll@{}}
\toprule
\textbf{Metric} & \textbf{Eager} & \textbf{Lease (120-tick)} & \textbf{Lazy (40-tick)} & \textbf{RCC ($n=60$)} \\
\midrule
Unauthorised ops     & $14.9 \pm 3.2$  & $29.9 \pm 2.6$ & $35.3 \pm 2.9$ & $32.3 \pm 13.0$ \\
Staleness max (ticks)& $10.0 \pm 0$    & $20.0 \pm 0$   & $23.0 \pm 0$   & $37.0 \pm 21.6$ \\
Messages sent        & $3.0 \pm 0$     & $3.0 \pm 0$    & $3.0 \pm 0$    & $3.0 \pm 0$ \\
Bound violations     & 0               & 0              & 0              & 0 \\
\bottomrule
\end{tabular}
\end{table}

Eager achieves the tightest observed bound ($14.9 \pm 3.2$)---but the non-zero
value deserves direct acknowledgment.
The 10-tick network latency across 10 agents in a 3-level delegation tree admits
operations during propagation, consistent with \cref{cor:eager}.
Eager does not achieve zero; it achieves $v \cdot \Delta_{\mathrm{network}}$
in expectation.

\RCC{} exhibits the highest variance ($\sigma = 13.0$).
This is not a violation of \cref{thm:rcc}: the theorem bounds $D \leq n$
\emph{per capability}, and zero bound violations confirm every individual
capability respected its budget.
The variance arises from an interaction between release/acquire boundaries at
different delegation depths and stochastic action scheduling---depending on the
seed, different agents exhaust their budgets at different points relative to the
cascade propagation wavefront.
The phenomenon is stochastic staleness \emph{within} the bound, not stochastic
violations \emph{of} the bound.

The proximity of all strategies (14.9--35.3 range) reflects the low effective
velocity under probabilistic scheduling.
The \textsc{CRM} scenario (\cref{sec:eval:crm}) is designed to maximally expose
velocity dependence.

\subsection{Scenario 2 --- CRM High-Velocity Agent}
\label{sec:eval:crm}

A sales synchronisation agent operates at 100~ops/tick.
Credential revoked at tick~0---worst case: the entire simulation window is
post-revocation.

\begin{table}[h]
\centering
\caption{CRM high-velocity results (mean $\pm$ $\sigma$, 10 runs, seeds 0--9).}
\label{tab:crm}
\small
\begin{tabular}{@{}lllll@{}}
\toprule
\textbf{Metric} & \textbf{Eager} & \textbf{Lease (60 ticks)} & \textbf{Lazy (23-tick)} & \textbf{RCC ($n=50$)} \\
\midrule
Unauthorised ops     & $500.0 \pm 0$     & $\mathbf{6{,}000.0 \pm 0}$ & $2{,}400.0 \pm 0$ & $\mathbf{50.0 \pm 0}$ \\
Staleness max (ticks)& $5.0 \pm 0$       & $60.0 \pm 0$               & $24.0 \pm 0$      & $0.0 \pm 0$ \\
Revalidation count   & $0.0 \pm 0$       & $0.0 \pm 0$                & $0.0 \pm 0$       & $1.0 \pm 0$ \\
Bound violations     & 0                 & 0                          & 0                 & 0 \\
\bottomrule
\end{tabular}
\end{table}

$\sigma = 0$ across all strategies and all seeds---a direct consequence of
deterministic scheduling ($v = 100$ constant).
The zero variance is a confirmatory result: it proves that damage bounds from
\cref{def:dbound} are \emph{exact}, not stochastic approximations.

\begin{table}[h]
\centering
\caption{Predicted vs.\ observed damage bounds (CRM scenario).}
\label{tab:bounds}
\small
\begin{tabular}{@{}llll@{}}
\toprule
\textbf{Strategy} & \textbf{Predicted $D_\pi$} & \textbf{Observed} & \textbf{Match} \\
\midrule
Eager: $v \cdot \Delta_{\mathrm{net}} = 100 \times 5$ & 500 & $500.0 \pm 0$ & Exact \\
Lazy: $v \cdot (\Delta_{\mathrm{net}} + \Delta_{\mathrm{check}}) = 100 \times (1 + 23)$ & 2,400 & $2{,}400.0 \pm 0$ & Exact \\
Lease: $v \cdot \mathrm{TTL} = 100 \times 60$ & 6,000 & $6{,}000.0 \pm 0$ & Exact \\
\RCC{}: $n$ & 50 & $50.0 \pm 0$ & Exact (\cref{thm:rcc}) \\
\bottomrule
\end{tabular}
\end{table}

The $120\times$ reduction from lease to \RCC{} ($6{,}000 \to 50$) is the central
empirical result (\cref{fig:barchart}).
\textsc{TTL} damage scales $O(v \cdot \mathrm{TTL})$; \RCC{} damage is capped at
$O(n)$ regardless of velocity.
Eager achieves a $12\times$ improvement over lease---but this is still an order
of magnitude above \RCC{}, and it imposes synchronous coupling that introduces
availability risk (\cref{sec:coherence}).

\begin{figure}[t]
  \centering
  \includegraphics[width=0.88\textwidth]{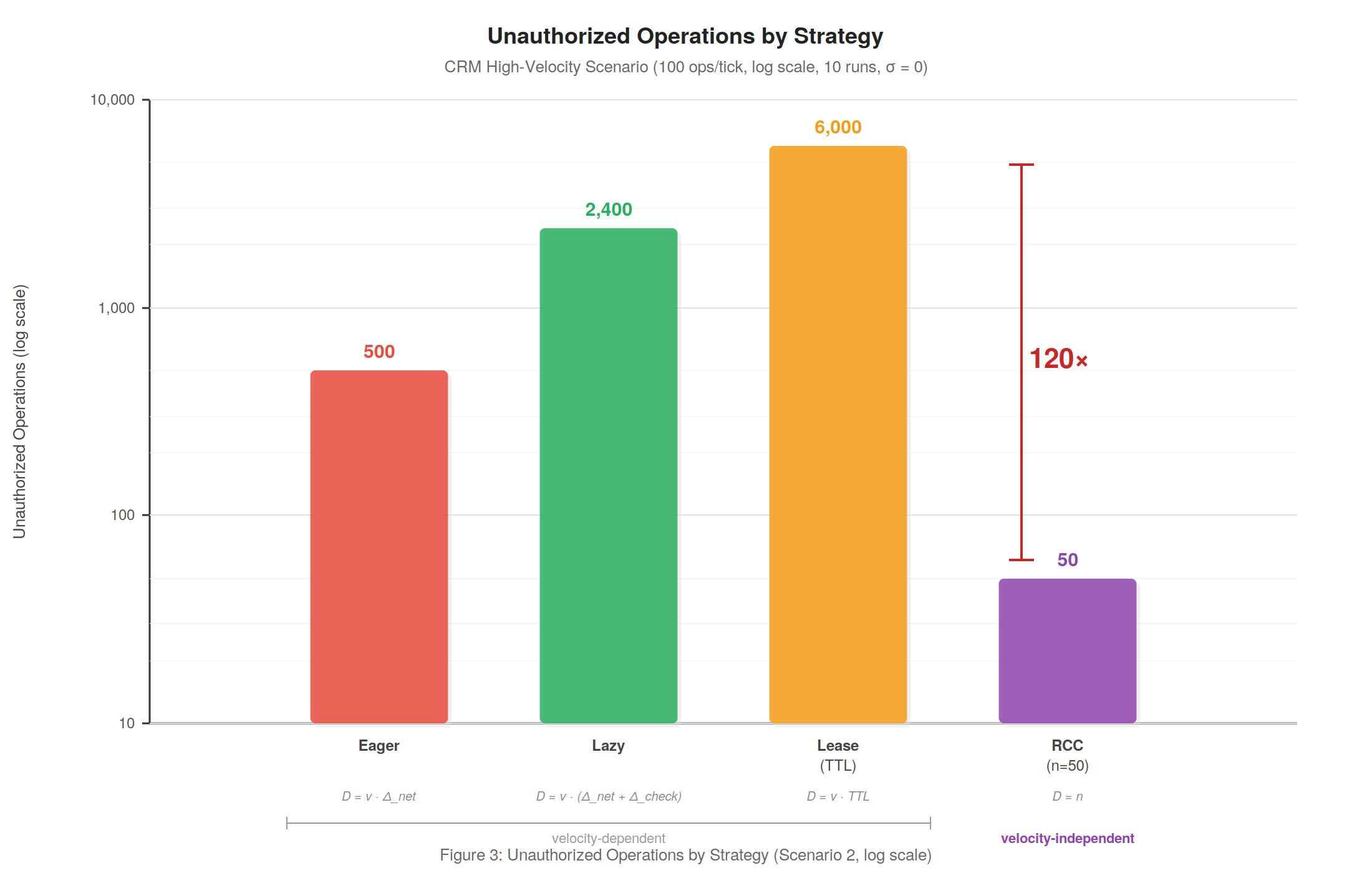}
  \caption{Unauthorised operations by strategy (Scenario 2, log scale).
    \textsc{CRM} high-velocity scenario (100~ops/tick, 10 runs, $\sigma = 0$).
    The $120\times$ gap between Lease and \RCC{} is annotated.
    Y-axis logarithmic; damage bound formulae shown below each bar.}
  \label{fig:barchart}
\end{figure}

\subsection{Scenario 3 --- Anomaly Auto-Revocation}
\label{sec:eval:anomaly}

Agent behaviour bifurcates to burst mode (12~ops/tick) at tick~50.
No explicit revocation is issued; the Trust Scorer detects the anomaly, drops
the trust score below $\tau = 0.4$, and triggers automatic revocation.

\begin{table}[h]
\centering
\caption{Anomaly auto-revocation results (mean $\pm$ $\sigma$, 10 runs, seeds 0--9).}
\label{tab:anomaly}
\small
\begin{tabular}{@{}lllll@{}}
\toprule
\textbf{Metric} & \textbf{Eager} & \textbf{Lazy} & \textbf{Lease (3000-tick)} & \textbf{RCC ($n=100$)} \\
\midrule
Unauthorised ops     & $108.0 \pm 0$  & $10.8 \pm 3.6$ & $\mathbf{2{,}950.8 \pm 3.6}$ & $\mathbf{16.0 \pm 3.7}$ \\
Staleness max (ticks)& $10.0 \pm 0$   & $1.9 \pm 0.3$  & $245.9 \pm 0.3$              & $1.8 \pm 0.4$ \\
Cascade completeness & 1.0            & 1.0            & 0.0                          & 1.0 \\
\bottomrule
\end{tabular}
\end{table}

\RCC{} achieves a $184\times$ reduction against lease \textsc{TTL}
($16.0 \pm 3.7$ vs.\ $2{,}950.8 \pm 3.6$), consistent with the
velocity-independence guarantee of \cref{thm:rcc} under dynamic revocation
triggers.

The counterintuitive result: \textbf{lazy outperforms eager}
($10.8 \pm 3.6$ vs.\ $108.0 \pm 0$).
Eager's synchronous blocking during revocation propagation \emph{delays} the
revocation signal processing while the agent continues operating on its cached
credential.
The Trust Scorer triggers revocation; the lazy agent's next check falls within
its 10-tick check interval---only $1.9 \pm 0.3$ ticks of staleness.
The eager agent, by contrast, accumulates 108 unauthorised operations during
the 10-tick broadcast across 5 agents, compounded by the 12~ops/tick burst
rate during the anomaly window.

This result demonstrates that the four strategies do not form a simple linear
ranking whose ordering is preserved across scenarios.
Strategy effectiveness is topology-dependent.
Lazy check intervals shorter than network propagation latency are advantageous
under trust-triggered revocation; the same check interval would be disastrous
under high-velocity deterministic scheduling (\textsc{CRM}:
Lazy $= 2{,}400$ vs.\ \RCC{} $= 50$).
The \RCC{} result ($16.0 \pm 3.7$) is the reliable backstop: regardless of
which strategy performs better in a specific topology, \RCC{} provides a bounded
ceiling.

The Lease strategy's cascade completeness of~0.0 is a clean result: with a
3,000-tick \textsc{TTL} and a 300-tick simulation, no lease expires within the
simulation window.
The agent never self-invalidates.

\subsection{Cost-Benefit Analysis}
\label{sec:eval:cost}

Theoretical revalidation overhead for \RCC{}:
\[
  \mathrm{Overhead}_{\RCC{}} = \frac{\Delta_{\mathrm{revalidation}}}{n}.
\]
With $\Delta_{\mathrm{revalidation}} = 1$~tick (local re-auth) and $n = 50$:
overhead $= 2.0\%$ of operations require a re-validation round-trip.
At $n = 10$: overhead $= 10.0\%$.
Overhead is inversely proportional to~$n$, instantiating a tunable
security-performance knob.
At $n < 5$, overhead approaches eager-strategy levels, eliminating the
consistency-directed advantage.

\begin{table}[h]
\centering
\caption{Strategy comparison summary.}
\label{tab:summary}
\small
\begin{tabular}{@{}lllll@{}}
\toprule
\textbf{Strategy} & \textbf{Damage Bound} & \textbf{Deterministic?} & \textbf{Vel.-Independent?} & \textbf{Coupling} \\
\midrule
Eager    & $v \cdot \Delta_{\mathrm{net}}$ & Yes & No  & Synchronous \\
Lease    & $v \cdot \mathrm{TTL}$          & Yes & \textbf{No}  & None \\
Lazy     & $v \cdot (\Delta_{\mathrm{net}} + \Delta_{\mathrm{check}})$ & Yes & \textbf{No} & On-demand \\
\RCC{}   & $n$                             & \textbf{Yes} & \textbf{Yes} & Acquire/release \\
\bottomrule
\end{tabular}
\end{table}

\RCC{} is the only strategy whose damage bound is independent of both agent
velocity and network latency while avoiding synchronous coupling.

\section{Discussion}
\label{sec:discussion}

\subsection{Scope and Applicability}

The operational equivalence of \cref{clm:equiv} holds under bounded-staleness
semantics for the core \MESI{} states and transitions.
The authorisation domain's extensions---scope attenuation, trust-based
revocation, hierarchical delegation \textsc{DAG}s---strictly extend the
hardware model.
Hardware bounds function as lower-bound estimates, not tight characterisations.

Huang et al.'s~\cite{huang2024} empirical support---hierarchical oversight
recovers 96.4\% of faulty errors at 5.5\% overhead---grounds the
overhead-for-safety trade inherent in coherence enforcement.
The 2--10\% \RCC{} overhead is the authorisation-layer equivalent.
This alignment is suggestive; I resist overclaiming, since the Huang et al.\
setting (semantic error recovery) and the \CCS{} setting (authorisation state
coherence) are related by analogy, not by formal equivalence.

\subsection{Limitations}
\label{sec:limitations}

\paragraph{Authority centralisation.}
\CCS{} assumes authority consistency strictly stronger than agent
consistency---the reliable directory controller assumption of \textsc{NUMA}
coherence~\cite[Ch.6, \S6.4.3]{sorin2020}.
Under network partition, agents operating on cached credentials remain within
their strategy bounds: exec-count agents are bounded by remaining operations
regardless of authority reachability.
Authority recovery follows directory controller failover: pending revocations
are replayed from a durable event log.
Split-brain scenarios are deferred to future work on replicated authority
services.

\paragraph{Adversarial scheduling.}
The evaluation uses deterministic seeds under fair scheduling.
\RCC{} bounds the \emph{quantity} of unauthorised operations, not their
\emph{semantic impact}.
A strategic adversary could select which $n$ operations to execute for maximum
damage---a capability with no hardware analogy but entirely plausible for a
compromised agent acting on learned objectives.
Adversarial operation scheduling within the execution budget is an open problem.

\paragraph{Simulation scale.}
Three scenarios, 1--10 agents, delegation depths 1--3.
Production systems introduce Byzantine failures, network partitions, and
heterogeneous identity providers.
Stress testing at 100+ agents is future work; the snooping-to-directory
transition heuristic requires empirical calibration at that scale.

\paragraph{Tick abstraction.}
Logical ticks abstract wall-clock time.
The relative comparisons ($120\times$, $184\times$ reductions) hold regardless
of tick-to-time mapping; absolute timing claims require a deployment-specific
mapping.

\paragraph{Formal verification.}
\Cref{clm:equiv} is verified by exhaustive transition enumeration, not by
mechanised proof.
A \textsc{TLA+} specification would enable model checking of safety (no
unauthorised operations after acquire) and liveness (all agents eventually reach
consistent state).
This is the primary formalisation avenue for future work.

\paragraph{Context contamination.}
Credential revocation does not revoke \emph{knowledge}.
An agent that read 50 records retains that context.
Revocation is necessary but not sufficient for full containment.

\paragraph{Emerging standards.}
Integration with \textsc{OIDC-A}~\cite{nagabhushanaradhya2025},
\textsc{SCIM} Agentic Schema~\cite{wahl2024}, and identity
chaining~\cite{schwenkschuster2024} requires protocol-level formalisation as
those standards mature.

\section{Conclusion}
\label{sec:conclusion}

Authorisation revocation in multi-agent delegation chains is operationally
equivalent to cache coherence in shared-memory multiprocessors under
bounded-staleness semantics.
The formal model (\cref{sec:formal}) establishes this through explicit state
mapping, a safety theorem, and damage bound analysis.
\textsc{TTL}-based approaches yield velocity-dependent damage scaling as
$O(v \cdot \mathrm{TTL})$.
Execution-count bounds (\RCC{}) yield a velocity-independent, deterministic bound
of $n$ operations per capability (\cref{thm:rcc})---enforcing coherence at
synchronisation boundaries rather than rejecting individual operations.

Tick-based simulation across three scenarios (120 total runs) confirms: predicted
bounds from \cref{def:dbound} match observed values exactly across all
deterministic configurations; \RCC{} achieves $120\times$ reduction versus
\textsc{TTL} in the \textsc{CRM} scenario and $184\times$ in the anomaly
scenario; zero bound violations across all 120 runs confirm the per-capability
scope of \cref{thm:rcc}.
High-velocity agents---any autonomous system executing above approximately
10~ops/second---should not hold time-bounded credentials.
Operation-bounded credentials force periodic re-authorisation at boundaries
defined by $n$, ensuring that revocation is discovered within a bounded number
of operations irrespective of how fast the agent operates.

\bibliographystyle{plain}
\bibliography{refs}

\end{document}